\definecolor{lightgray}{RGB}{224,224,224}
\newtheorem{theorem}{Theorem}
\newtheorem{exmp}{Example}
\newtheorem{definition}{Definition}
\newtheorem{remark}{Remark}
\newtheorem{conjecture}{Conjecture}
\newtheorem{prop}{Proposition}
\newcommand{\euler}{\mathrm{e}}
\begin{document}
\title{On the Redundancy of Function-Correcting Codes over Finite Fields}

\author{%
  \IEEEauthorblockN{Hoang Ly and Emina Soljanin}
  \IEEEauthorblockA{
  Department of Electrical \& Computer Engineering, Rutgers University}
                    E-mail: \{mh.ly,\ emina.soljanin\}@rutgers.edu
}
\maketitle

\begin{abstract}
Function-correcting codes (FCC) protect specific function evaluations of a message against errors. This condition imposes a less stringent distance requirement than classical error-correcting codes (ECC), allowing for reduced redundancy. FCC were introduced by Lenz~\emph{et al.}~(2021), who also established a lower bound on the optimal redundancy for FCC over the binary field. Here, we derive an upper bound within a logarithmic factor of this lower bound. We show that the same lower bound holds for any finite field. Moreover, we show that this bound is matched for sufficiently large fields by demonstrating that it also serves as an upper bound. Furthermore, we construct an encoding scheme that achieves this optimal redundancy. Finally, motivated by these two extreme regimes, we conjecture that our bound serves as a valid upper bound across all finite fields.

\end{abstract}

\begin{IEEEkeywords}
\noindent
Function-correcting codes, redundancy, finite field, MDS codes.
\end{IEEEkeywords}

\section{Introduction}
Function-Correcting Codes (FCC), introduced by Lenz et al. in 2021~\cite{FCC:conf/isit/LenzBWY21,FCCs:journals/tit/LenzBWY23}, are a novel class of codes that allow the receiver to reliably recover a specific function (or feature) of a message without reconstructing the entire message. This paradigm should enable substantial redundancy reduction when only a particular message function (or feature) is needed. Lenz et al.\ established an equivalence between FCC and irregular-distance codes—codes defined by non-uniform, function-dependent distance constraints between codewords. Leveraging this relationship, they derived bounds on the optimal redundancy of FCC based on the redundancy requirements of the corresponding irregular-distance codes. They also analyzed these bounds for some particular functions. Some bounds were later improved and generalized in~\cite{FCC:journals/arxiv/GeXZZ25,FCC:journals/arxiv/RajputRHH2025}.

A key advantage of FCC is their reduced overhead in scenarios where full data recovery is unnecessary and only a specific attribute of the message—namely, the evaluation of a given function—is of interest, as in distributed computation, data storage, or image classification tasks~\cite{FCCs:journals/tit/LenzBWY23}. For example, Premlal and Rajan extended the concept to linear functions~\cite{FCC:journals/arxiv/PremlalR}. Similarly, Xia et al.\ applied the FCC principles to symbol pair read channels -- a channel model in which overlapping symbol pairs are read in storage systems~\cite{FCC:journals/tit/QingfengHB24}. Yaakobi et al.\ recently extended the analysis to $b$-symbol read channels, extending FCC to multisymbol reads in modern storage devices~\cite{FCC:journals/arxiv/SinghSY25}. 
Authors in~\cite{message_stream:conf/innovations/GuptaZ25,message_stream:journals/corr/abs-2407-06446} investigated a class of functions called \textit{linear streaming}, where the receiver only needs to compute a single linear function of the message in a single pass and computationally limited space. They constructed an encoding scheme in which the codeword length grows nearly linearly with the message dimension, ensuring that the receiver can correctly evaluate the linear function with high probability under adversarial error rates of up to $1/4 - \epsilon$ (for any  $\epsilon>0$).

As with classical ECC, relatively little is known about the exact redundancy necessary for FCC to exist. In~\cite{FCCs:journals/tit/LenzBWY23}, it was shown that determining the optimal redundancy for a given function $f$ can be reformulated as the problem of finding the largest independent set in a specific, function-dependent graph, a problem known to be NP-complete. Consequently, no general explicit expression for optimal redundancy is known in the literature. Moreover, no explicit upper bounds for optimal redundancy are known for general functions over the binary field. For finite fields larger than $\mathbb{F}_2$, explicit bounds on the optimal redundancy have not yet been established.

In this paper, we establish redundancy bounds for FCC. We provide an upper bound for binary field, staying within a logarithmic factor of the known lower bound~\cite{FCCs:journals/tit/LenzBWY23}, and extend this lower bound to any \( q \)-ary finite field. For sufficiently large fields, we prove that this lower bound is matched by constructing an optimal encoding scheme. Motivated by these results, we conjecture that our upper bound holds across all finite fields.

The remainder of this paper is organized as follows. Section~\ref{sec:Problem_statement} formally defines the FCC model and introduces the necessary notation. In this section, we provide an illustrative example showing that the established lower bound on the redundancy of binary FCC is tight and achievable. We conclude by summarizing our main contributions. Section~\ref{main_results} presents our main results on redundancy bounds for FCC over binary and larger finite fields. Finally, Section~\ref{sec:Conclusion} concludes the paper.
\section{Problem Statement}\label{sec:Problem_statement}
\subsection{Notation}
We use the standard algebra and coding theory notations. \( \mathbb{F}_q \) denotes a finite field over some prime or prime power \( q \), with \( \mathbb{F}_2 \) denoting the binary field, and \( \mathbb{F}_q^n \) refering to an \( n \)-dimensional vector space over \( \mathbb{F}_q \). A \( q \)-ary, linear code \( \mathcal{C} \) of length \( n \), dimension \( k \), and minimum distance (hamming) \( d \) is denoted as \( [n, k, d]_q \). It is a \( k \)-dimensional subspace of the \( n \)-dimensional vector space \( \mathbb{F}_q^n \). The Hamming weight of a codeword \( \mathbf{x} \) in \( \mathcal{C} \), which counts the number of non-zero elements, is denoted by \( w(\mathbf{x}) \), and $d_H(\mathbf{x},\mathbf{y})$ denotes the Hamming distance between two sequences $\mathbf{x}$ and $\mathbf{y}$. Furthermore, \( \mathbf{0}_k \) and \( \mathbf{1}_k \) denote the all-zero and all-one row vectors of length \( k \), respectively. 
The binary unit column vector \( \mathbf{e}_i \) has a 1 at position \( i \) and 0s elsewhere. Finally, $\log$ denotes the base-2 logarithm function, and $\euler \approx 2.718$ is Euler's number.

\subsection{System Model}
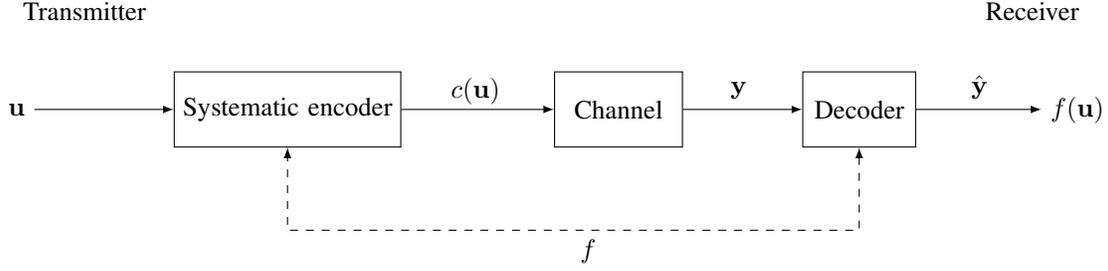
\begin{figure*}[]
    \centering
    \begin{tikzpicture}[>=latex]
    \node (u) at (-4.9,0) {\( \mathbf{u} \)};
    \node[draw, rectangle, minimum width=1.5cm, minimum height=1cm] (encoder) at (-1.3,0) {Systematic encoder};
    \node (p) at (1.2, 0.25) {\( c(\mathbf{u}) \)};
    \node[draw, rectangle, minimum width=1.7cm, minimum height=1cm] (channel) at (3.1,0) {Channel};
    \node (y) at (4.7, 0.25) {\( \mathbf{y} \)};
    \node[draw, rectangle, minimum width=1.5cm, minimum height=1cm] (decoder) at (6.3,0) {Decoder};
    \node (y) at (7.9, 0.3) {\( \hat{\mathbf{y}} \)};
    \node (D) at (9.2, 0) {\( f(\mathbf{u}) \)};
    
    \draw[->] (u) -- (encoder);

    \draw[->] (encoder) -- (channel);
    \draw[->] (channel) -- (decoder);
    \draw[->] (decoder) -- (D);

    \node at (-1.4, 1.3) {Transmitter};
    \node at (6.3, 1.3) {Receiver};

    \draw[dashed,<->] (decoder.south) -- ++(0,-1.1) 
    -- ++(-7.2,0) node[midway, below] {\( f \)} -- ++(-0.4,0) -- (encoder.south);

\end{tikzpicture}
    \caption{The transmitter has a message \( \mathbf{u} \), where the attribute \( f(\mathbf{u}) \) is of particular interest to the Receiver. To ensure the recoverability of this attribute, the transmitter encodes \( \mathbf{u} \) with a redundancy vector \( \mathbf{p}(\mathbf{u}) \) using a systematic encoder. 
Given a received vector \( \mathbf{y} \), which is an at-most-\( t \) erroneous version of the transmitted codeword  
\(c(\mathbf{u}) = (\mathbf{u}, \mathbf{p}(\mathbf{u})),\) and assuming knowledge of the function \( f \), the receiver can correctly compute \( f(\mathbf{u}) \).}
    \label{fig:system_model}
\end{figure*}

We adopt the system model proposed in~\cite{FCC:conf/isit/LenzBWY21,FCCs:journals/tit/LenzBWY23}, as illustrated in Figure~\ref{fig:system_model}. The transmitter has a message \(\mathbf{u} \in \mathbb{F}_q^k\), and the receiver wishes to evaluate a function $f$ on the message.
\[
f : \mathbb{F}_q^k \to \mathrm{Im}(f), ~ \text{where} ~
\mathrm{Im}(f) \triangleq \{ f(\mathbf{u}) : \mathbf{u} \in \mathbb{F}_q^k \}
\]
and \(\bigl|\mathrm{Im}(f)\bigr|\) is the cardinality of the image set of the function. 
The transmitter employs a \emph{systematic} encoder \[ c : \mathbb{F}_q^k \to \mathbb{F}_q^{k+r}, \] that maps each message $\mathbf{u}$ to a codeword of the form
\[
c(\mathbf{u}) \;=\; \bigl(\mathbf{u},\,\mathbf{p}(\mathbf{u})\bigr),
\]
where \(\mathbf{p}(\mathbf{u})\in \mathbb{F}_q^r\) is the added \emph{redundancy vector}, and \(r\) is called \emph{redundancy}. Thus, each codeword is a concatenation of the original message and its redundancy vector. 

The codeword \(c(\mathbf{u})\) is transmitted through a channel that may introduce up to \(t\) symbol errors. 
Therefore, the receiver observes a noisy copy of the transmitted codeword
\[
\mathbf{y} \;=\; c(\mathbf{u}) + \mathbf{e} \;\in\; \mathbb{F}_q^{k+r}, 
~ \text{with} ~ w(\mathbf{e}) \le t,
\]
where $\mathbf{e} \in \mathbb{F}_q^{k+r}$ is the noise vector and the addition is done in modulo $q$. We adopt the definition of FCC over any finite field from~\cite{FCC:conf/itw/PremlalR24}.
\begin{definition}[\hspace{-0.1mm}\cite{FCC:conf/itw/PremlalR24}]
A \emph{systematic} encoding \( c : \mathbb{F}_q^k \to \mathbb{F}_q^{k+r} \) is said to be an \((f,t)\)\emph{-FCC} for a function \( f : \mathbb{F}_q^k \to \mathrm{Im}(f) \) if, for all distinct pairs \(\mathbf{u}_i, \mathbf{u}_j \in \mathbb{F}_q^k\) with \( f(\mathbf{u}_i) \neq f(\mathbf{u}_j)\), the minimum distance condition
\begin{equation}\label{eq:codeword_distance}
    d_H\bigl(c(\mathbf{u}_i),\, c(\mathbf{u}_j)\bigr) \;\ge\; 2t + 1
\end{equation}
holds. Here, $t$ parameterizes the target correction capability.
\end{definition}

\begin{definition}[\hspace{-0.1mm}\cite{FCC:conf/itw/PremlalR24}]
The \emph{optimal redundancy} \( r_f(k,t) \) is the smallest \(r\) such that there exists an \((f,t)\)-FCC 
\[
c : \mathbb{F}_q^k \,\to\, \mathbb{F}_q^{k+r}.
\]
\end{definition}

\begin{remark}\label{rm:FCC_to_ECC}
The following properties, observed in~\cite{FCCs:journals/tit/LenzBWY23}, illustrate the relationship between FCC and classical ECC:
\begin{itemize}
    \item If \(f\) is \textbf{bijective}, then \(|\mathrm{Im}(f)| = |\mathcal{C}| = q^k\). In this case, every pair of codewords must satisfy a minimum distance of \(2t+1\), and the \((f,t)\)-FCC reduces to a systematic \([n,\;k,\;2t+1]_q\) error-correcting code (ECC). In other words, any ECC over the same message space can be viewed as a special case of an FCC. Therefore,
\[
r_f(k,t) \;\le\; v(k,t),
\]
where \(v(k,t)\) denotes the smallest integer such that a systematic \([k + v(k,t),\;k,\;2t+1]_q\) ECC exists.

    \item For a \textbf{constant function} \(f\), that is, $|\mathrm{Im}(f)| = 1$, no additional redundancy is required. Indeed, \(c(\mathbf{u})=\mathbf{u}\) trivially meets the definition of an FCC, yielding \(r_f(k,t)=0\).
\end{itemize}
\end{remark}

By definition, the receiver can correctly determine \(f(\mathbf{u})\) from any received vector \(\mathbf{y}\), provided \(\mathbf{y}\) differs from a valid codeword \(c(\mathbf{u})\) by at most \(t\) symbols and that \( f \) and \( c \) are known to the receiver.

Unlike classical ECC, where every pair of codewords must satisfy a minimum distance requirement, FCC allow codewords associated with messages evaluating to the same function value to be arbitrarily close in the Hamming distance. This relaxation can reduce the redundancy required compared to classical error correction. FCC are therefore not equivalent to just applying an ECC over a codebook of size \(|\mathrm{Im}(f)|\). Moreover, it has been shown that the required redundancy of FCC do not depend on the size of the function's image set ($|\mathrm{Im}(f)|$), but rather on whether messages that are close in Hamming distance evaluate to different function values~\cite{FCCs:journals/tit/LenzBWY23}. The following example illustrates this idea and provides a practical connection.
\begin{exmp}[Multi-input $\mathsf{OR}$ Function]\label{ex:ORfunction}
Consider a simple sensing setup where \(\mathbf{u} = u_1u_2\dots u_k \in \mathbb{F}_2^k\) records the presence (1) or absence (0) of signals from \(k\) sensors. Define
\[
f(\mathbf{u}) = u_1 \vee u_2 \vee \dots \vee u_k \in \{0,1\},
\]
which serves as a binary indicator of whether any signal is active. This function behaves as a multi-input $\mathsf{OR}$ operand. Clearly,
\[
f(\mathbf{u}) = 0 \iff \mathbf{u} = \mathbf{0}_k,\quad
f(\mathbf{u}) = 1 \text{ otherwise}.
\]

To assess redundancy, compare \(\mathbf{u}_1 = \mathbf{0}_k\) and \(\mathbf{u}_2 = 00\dots01\), which yield different function values. Then
\begin{align*}
2t + 1 
&\le 
d_H(c(\mathbf{u}_1), c(\mathbf{u}_2)) 
= 1 + d_H(\mathbf{p}(\mathbf{u}_1), \mathbf{p}(\mathbf{u}_2)),
\end{align*}
so
\[
d_H(\mathbf{p}(\mathbf{u}_1), \mathbf{p}(\mathbf{u}_2)) \ge 2t,
\]
implying at least \(2t\) redundancy symbols are required. This bound is achievable by setting
\[
\mathbf{p}(\mathbf{0}_k) = \mathbf{0}_{2t},\quad 
\mathbf{p}(\mathbf{u}) = \mathbf{1}_{2t} \text{ for } \mathbf{u} \ne \mathbf{0}_k.
\]
Under this encoding, \(c(\mathbf{u}) = \mathbf{0}_{k+2t}\) if and only if \(\mathbf{u} = \mathbf{0}_k\), and for all other \(\mathbf{u}\), the codeword weight satisfies \(w(c(\mathbf{u})) \ge 2t + 1\), ensuring the distance condition is met. Thus, \(r_f(k, t) = 2t\).

This example shows that the lower bound \(2t\) is tight and independent of the code dimension \(k\).

Consider now the case where $|\mathrm{Im}(f)| = 2^k$, that is, \(f\) is bijective, so the FCC effectively behaves as a standard ECC. We will show that, for \(k>1\), one necessarily has \(r_f(k,t) = v(k,t) > 2t\). In fact, from the previous argument, at least \( 2t \) redundancy is required, implying \( n = r+k > 2t + 1 \) for \( k > 1 \). Applying the Hamming bound~\cite{Coding:books/MacWilliamsS77}:
\begin{align*}
2^r 
\;& =\;
\frac{2^n}{2^k}
\;\ge\;
\sum_{j=0}^t \binom{n}{j}(2-1)^j
\;>\;
\sum_{j=0}^t \binom{2t+1}{j}\\
\;& =\;
\frac{1}{2} \left(\sum_{j=0}^{2t+1} \binom{2t+1}{j}\right)
\;=\;
\frac{1}{2} \cdot 2^{2t+1}
\;=\;
2^{2t},
\end{align*}
which implies that \(r > 2t\), where the first inequality comes from the Hamming bound for binary codes ($q = 2$). 

Hence \(r>2t\), implying $v(k, t) > 2t$. In other words, in the case of classical ECC, we need at least $2t+1$ redundancy to maintain a minimum distance $2t+1$ in our code whenever $k \ge 2$. This shows that having all messages except one evaluate to the same function value helps reduce redundancy compared to the classical ECC case. 
\end{exmp}

As demonstrated in the previous example, even when all messages except one give the same function value, the redundancy required is still at least $2t$. This observation was generalized in the following proposition.

\begin{prop}[\hspace{-0.1mm}\cite{FCCs:journals/tit/LenzBWY23}]\label{pro:lower_bound_binary}
Let \(f: \mathbb{F}_2^k \to \mathrm{Im}(f)\) be any function with \(\lvert\mathrm{Im}(f)\rvert \ge 2\). Then the optimal redundancy \(r_f(k,t)\) of an \((f,t)\)-FCC over binary field $\mathbb{F}_2$ satisfies
\[
r_{f}(k,t) \;\ge\; 2t.
\]
\end{prop}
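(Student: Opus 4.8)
The plan is to show that any function with at least two distinct output values must contain a pair of messages whose evaluations differ but which are themselves close in Hamming distance, thereby forcing the redundancy vectors to compensate.

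First I would observe that since $\lvert\mathrm{Im}(f)\rvert \ge 2$, there exist two messages $\mathbf{u}_a, \mathbf{u}_b \in \mathbb{F}_2^k$ with $f(\mathbf{u}_a) \neq f(\mathbf{u}_b)$. The goal is to find such a pair that is as close as possible in Hamming distance; ideally at distance exactly $1$. To achieve this, I would consider a minimal-length path in the Hamming graph connecting $\mathbf{u}_a$ to $\mathbf{u}_b$, where consecutive vertices differ in a single coordinate. Since $f$ takes different values at the two endpoints, by a discrete intermediate-value argument there must be two \emph{adjacent} messages $\mathbf{u}_i, \mathbf{u}_j$ along this path (i.e.\ with $d_H(\mathbf{u}_i, \mathbf{u}_j) = 1$) such that $f(\mathbf{u}_i) \neq f(\mathbf{u}_j)$.

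Next I would apply the FCC distance condition~\eqref{eq:codeword_distance} to this pair. Writing $c(\mathbf{u}_i) = (\mathbf{u}_i, \mathbf{p}(\mathbf{u}_i))$ and $c(\mathbf{u}_j) = (\mathbf{u}_j, \mathbf{p}(\mathbf{u}_j))$, the systematic structure gives
\[
d_H\bigl(c(\mathbf{u}_i), c(\mathbf{u}_j)\bigr)
= d_H(\mathbf{u}_i, \mathbf{u}_j) + d_H\bigl(\mathbf{p}(\mathbf{u}_i), \mathbf{p}(\mathbf{u}_j)\bigr)
= 1 + d_H\bigl(\mathbf{p}(\mathbf{u}_i), \mathbf{p}(\mathbf{u}_j)\bigr).
\]
Since the FCC requires $d_H(c(\mathbf{u}_i), c(\mathbf{u}_j)) \ge 2t+1$, this forces $d_H(\mathbf{p}(\mathbf{u}_i), \mathbf{p}(\mathbf{u}_j)) \ge 2t$. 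Because the Hamming distance between two redundancy vectors cannot exceed their common length $r$, we conclude $r \ge 2t$, and hence $r_f(k,t) \ge 2t$, exactly as in the $\mathsf{OR}$ example.

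The main obstacle is the existence of the distance-one pair with differing function values: the intermediate-value step relies crucially on the binary structure, where flipping a single bit moves between adjacent vectors so that any Hamming path has unit steps. I would make this rigorous by induction on the path length, noting that if $f$ were constant on every adjacent pair it would be constant along the whole path, contradicting $f(\mathbf{u}_a) \neq f(\mathbf{u}_b)$. This is precisely where the argument is cleanest over $\mathbb{F}_2$ and would need modification over larger fields, since a single coordinate change can then correspond to a larger codeword-distance gain, which is exactly why the generalization to arbitrary $\mathbb{F}_q$ is treated separately later.
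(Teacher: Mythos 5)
Your proof is correct and follows essentially the same route as the paper: establish the existence of a Hamming-distance-one pair with distinct function values (the paper does this by inducting over the weight shells $A_i$ starting from $\mathbf{0}_k$ rather than along a path between $\mathbf{u}_a$ and $\mathbf{u}_b$, but the underlying ``if $f$ is constant on all adjacent pairs then $f$ is constant'' connectivity argument is identical), then use the systematic structure and the distance condition~\eqref{eq:codeword_distance} to force $d_H\bigl(\mathbf{p}(\mathbf{u}_i),\mathbf{p}(\mathbf{u}_j)\bigr) \ge 2t$. One small correction: your closing concern that the intermediate-value step breaks over larger fields is unfounded---single-coordinate steps still connect any two vectors of $\mathbb{F}_q^k$, and the paper's Theorem~\ref{thm:bound_finite_field} extends this exact argument to arbitrary $\mathbb{F}_q$ essentially unchanged.
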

\subsection{Summary of Results}
A central problem in the study of FCC is how to construct function-correcting schemes with minimal redundancy. In this paper, we investigated the redundancy requirements of FCC over finite fields, focusing on their ability to protect specific function evaluations of messages from errors. Our main contributions are as follows:
\begin{itemize}
    \item \textbf{Lower Bound on Redundancy:} We established that the optimal redundancy \( r_f(k, t) \) of an \((f, t)\)-FCC is at least \( 2t \) for any finite field \( \mathbb{F}_q \), provided the function \( f \) maps to at least two distinct values.
    \item \textbf{Upper Bound for Binary Field:} For the binary field \( \mathbb{F}_2 \), we derived an upper bound on \( r_f(k, t) \), showing that it grows logarithmically with the code dimension \( k \). 
    \item \textbf{Achievability for Large Fields:} We demonstrated that for sufficiently large fields (\( q \geq k + 2t \)), the lower bound \( r_f(k, t) = 2t \) is achievable. This result was achieved by constructing systematic Maximum Distance Separable (MDS) codes with minimum distance \( 2t + 1 \).
    \item \textbf{Conjecture for Moderate Field Sizes:} Motivated by the results for binary and large fields, we conjectured that the upper bound derived for binary field remains valid for all finite fields, even when \( q < k + 2t \).
\end{itemize}

These results provide a comprehensive understanding of the FCC's redundancy requirements for any field, highlighting the trade-offs among field size, code dimension, and redundancy.

\section{Main Results}\label{main_results}
In Subsection~A, we present an upper bound on the optimal redundancy of FCC over the binary field. In Subsection~B, we establish a lower bound on the optimal redundancy and show, by explicit construction, that this bound is achieved when the field size is sufficiently large. Based on these results, we conjecture in Subsection~C that the same upper bound holds for the optimal redundancy of FCC over all finite fields.

\subsection{Bounds on FCC over Binary Field}
\begin{theorem}\label{theo:bound_binary}
Let \(f: \mathbb{F}_2^k \to \mathrm{Im}(f)\) be a function with \(\lvert \mathrm{Im}(f) \rvert \ge 2\), and suppose \(k \ge 2\). Then the optimal redundancy \(r_f(k,t)\) of an \((f,t)\)-FCC over $\mathbb{F}_2$ satisfies
\begin{equation}\label{eq:theorem_bounds}
    2t \;\;\le\; r_f(k,t) \;<\; \frac{t \,\log \bigl(2k\bigr)}{1 \;-\; \frac{t}{k} \,\log \euler}\,.
\end{equation}
\end{theorem}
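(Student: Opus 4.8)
The left inequality is exactly Proposition~\ref{pro:lower_bound_binary}, so the whole task is the upper bound. My plan is to bypass the function $f$ entirely. By Remark~\ref{rm:FCC_to_ECC}, any systematic binary code of minimum distance $2t+1$ is an $(f,t)$-FCC for \emph{every} $f$, whence $r_f(k,t)\le v(k,t)$; it therefore suffices to exhibit a systematic $[k+r,\,k,\,2t+1]_2$ error-correcting code whose redundancy $r$ obeys the claimed bound. This reduction is essentially forced: when $f$ is injective the FCC distance condition degenerates to the full minimum-distance requirement, so no bound that beats a genuine error-correcting code can hold uniformly over all admissible $f$. The content of the theorem is thus a sharp bound on $v(k,t)$.

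The decisive point is that a generic existence argument will not suffice. A random-coding or Gilbert--Varshamov construction for designed distance $2t+1$ spends redundancy of order $(2t-1)\log k$, i.e.\ roughly twice the target, whereas the right-hand side of \eqref{eq:theorem_bounds} is of order $t\log k$. I would instead invoke a primitive narrow-sense BCH code of length $n=2^{m}-1$ with designed distance $2t+1$: because conjugate roots $\alpha^{i}$ and $\alpha^{2i}$ share a minimal polynomial over $\mathbb{F}_2$, the generator polynomial is a product of at most $t$ distinct minimal polynomials, each of degree at most $m$, so its redundancy is at most $tm=t\log(n+1)$. The factor $t$ rather than $2t$ is exactly what makes \eqref{eq:theorem_bounds} attainable. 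Shortening such a code to dimension $k$ and writing it in systematic form preserves both the minimum distance and the redundancy, so the construction is legitimate as soon as the BCH dimension $2^{m}-1-tm$ is at least $k$.

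It then remains to minimize $m$ and convert the feasibility condition $2^{m}-1-tm\ge k$ into the stated closed form. Writing $r=tm$, this reads $2^{r/t}\ge k+r+1$, i.e.\ $r\ge t\log(k+r+1)$; expanding $\log(k+r+1)=\log k+\log\!\bigl(1+\tfrac{r+1}{k}\bigr)$ and applying $\log(1+x)\le x\log\euler$ turns the threshold into a linear inequality in $r$, which rearranges to $r<\tfrac{t\log(2k)}{1-\frac{t}{k}\log\euler}$, the strict inequality absorbing the lower-order $\tfrac{t}{k}\log\euler$ correction into the clean term $t=t\log 2$. I expect the genuine obstacle to lie in this last, self-referential step rather than in the coding theory: one must confirm that the smallest feasible integer $m$ truly lands strictly below the claimed value, track the integrality of $m$ and of $r=tm$, and record that the bound is meaningful only in the regime $t<k/\log\euler$ where the denominator $1-\tfrac{t}{k}\log\euler$ is positive (outside it the statement is vacuous). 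The hypothesis $k\ge2$ guarantees the construction is non-degenerate.
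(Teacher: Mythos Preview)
Your approach is essentially identical to the paper's: invoke Proposition~\ref{pro:lower_bound_binary} for the lower bound, then for the upper bound use Remark~\ref{rm:FCC_to_ECC} to reduce to $v(k,t)$, appeal to the BCH redundancy estimate $r\le t\log(n+1)$, expand $\log(k+r+1)=\log k+\log(1+\tfrac{r+1}{k})$, bound the second term by $\tfrac{r+1}{k}\log\euler$, and solve the resulting linear inequality in $r$. The paper carries out exactly these steps and finishes by noting $\tfrac{\log\euler}{k}<1$ for $k\ge2$ to absorb the additive $\tfrac{t\log\euler}{k}$ into $t\log2$, matching your closing observation. Your discussion is in fact more scrupulous than the paper's on two points the paper leaves implicit: the integrality of $m$ when passing from the feasibility condition $2^{m}-1-tm\ge k$ to the closed form, and the requirement $t<k/\log\euler$ for the denominator to be positive.
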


\begin{proof}
\textbf{Lower Bound.}  
The lower bound \(r_f(k,t) \ge 2t\) is exactly Proposition~\ref{pro:lower_bound_binary}.  A concise proof also appears in~\cite{FCCs:journals/tit/LenzBWY23}. 

\medskip
\textbf{Upper Bound.}  
We establish the strict inequality
\begin{equation}\label{eq:to_show_upper}
    r_f(k,t) \;<\; \frac{t \,\log\bigl(2k\bigr)}{1 \;-\; \frac{t}{k}\,\log \euler}.
\end{equation}
We rely on the fact in Remark~\ref{rm:FCC_to_ECC} that any \([n,k,2t+1]\) \emph{systematic} binary code can serve as an \((f,t)\)-FCC.

From classical coding theory (see, e.g., \cite[Ch.\,9]{Coding:books/MacWilliamsS77}), \emph{binary, systematic BCH codes} of length \(n\) and minimum Hamming distance \(2t+1\) exist with redundancy
\[
r \;\le\; \Bigl\lfloor t\, \log\bigl(n+1\bigr) \Bigr\rfloor \le t\, \log\bigl(n+1\bigr).
\]
Since \(n+1 = k+r + 1\), we obtain
\[
\log\bigl(n+1\bigr)
\;=\;
\log\Bigl(k + \bigl(r+1\bigr)\Bigr)
\;=\;
\log(k)\;+\;\log\!\Bigl(1+\tfrac{r+1}{k}\Bigr).
\]
Using the inequality \(\log(1+x)\le x \log \euler\) for \(x > 0\), it follows that
\begin{equation}\label{eq:bch_log_bound}
\log\bigl(n+1\bigr)
\;\le\;
\log(k)
\;+\;
\tfrac{r+1}{k}\,\log \euler.
\end{equation}
Rearrange the terms, hence
\[
r \;\Bigl(1 \;-\; \frac{t\,\log \euler}{k}\Bigr)
\;\le\;
t\,\log(k)
\;+\;
\frac{t\,\log \euler}{k},
\]
and so
\[
r 
\;<\;
\frac{t\,\log\bigl(2k\bigr)}{1 \;-\; \tfrac{t\,\log \euler}{k}},
\]
where we used the fact \(\frac{\log\euler}{k} < 1 \) for \(k\ge 2\). Hence there exists a systematic \([k+r,k,2t+1]\) binary code with \(r\) bounded exactly as in \eqref{eq:to_show_upper}.

\medskip
\emph{Concluding the Proof.}  
By Remark~\ref{rm:FCC_to_ECC}, any systematic linear code with minimum distance \(2t+1\) suffices to correct \(t\) symbol errors for \emph{all} pairs of distinct codewords.  Therefore, it also suffices to distinguish any two messages \(\mathbf{u}_i,\mathbf{u}_j\) with \(f(\mathbf{u}_i)\neq f(\mathbf{u}_j)\). Consequently, the redundancy of an \((f,t)\)-FCC, \(r_f(k,t)\), is at most the \(r\) of this code.  Hence
\[
r_f(k,t) 
\;\le\;
r 
\;<\; 
\frac{t\,\log\bigl(2k\bigr)}{1 \;-\; \tfrac{t}{k}\,\log \euler},
\]
completing the proof of \eqref{eq:theorem_bounds}.
\end{proof}
We observe that for fixed correction capability $t$ and increasing message dimension $k$, the upper bound remains within a logarithmic factor of the lower bound $2t$.
\subsection{Bounds and Achievability over Finite Fields}
\begin{theorem}\label{thm:bound_finite_field}
  Let \(f : \mathbb{F}_q^k \to \mathrm{Im}(f)\) be a function with \(\lvert \mathrm{Im}(f)\rvert \ge 2\). 
  Then the optimal redundancy \(r_f(k,t)\) of an \((f,t)\)-FCC over any field $\mathbb{F}_q$ satisfies
  \[
    r_f(k, t) \;\geq\; 2t.
  \]
Moreover, equality happens when \(q \ge k + 2t\).
\end{theorem}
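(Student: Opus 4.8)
The plan is to prove the statement in two pieces. The lower bound $r_f(k,t)\ge 2t$ will be established for arbitrary $\mathbb{F}_q$ by generalizing the argument behind Proposition~\ref{pro:lower_bound_binary}, and the equality when $q\ge k+2t$ will follow by exhibiting an explicit $(f,t)$-FCC whose redundancy is exactly $2t$ — namely a systematic MDS (Reed--Solomon) code — and then invoking the ECC-to-FCC reduction of Remark~\ref{rm:FCC_to_ECC}.

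For the lower bound, I would exploit the connectivity of the Hamming graph on $\mathbb{F}_q^k$: any two vectors are joined by a path that alters a single coordinate at each step. Since $\lvert\mathrm{Im}(f)\rvert\ge 2$, pick $\mathbf{u}_i,\mathbf{u}_j$ with $f(\mathbf{u}_i)\ne f(\mathbf{u}_j)$ and walk along such a path; the function value must change across some adjacent pair, so there exist $\mathbf{a},\mathbf{b}\in\mathbb{F}_q^k$ with $d_H(\mathbf{a},\mathbf{b})=1$ yet $f(\mathbf{a})\ne f(\mathbf{b})$. The systematic structure of $c$ then gives
\[
2t+1 \;\le\; d_H\bigl(c(\mathbf{a}),c(\mathbf{b})\bigr) \;=\; 1 + d_H\bigl(\mathbf{p}(\mathbf{a}),\mathbf{p}(\mathbf{b})\bigr),
\]
hence $d_H(\mathbf{p}(\mathbf{a}),\mathbf{p}(\mathbf{b}))\ge 2t$. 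Because two length-$r$ vectors have Hamming distance at most $r$, this forces $r\ge 2t$, i.e.\ $r_f(k,t)\ge 2t$. The argument never uses $q=2$, so it covers every finite field.

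For achievability I would construct a systematic $[\,k+2t,\,k,\,2t+1\,]_q$ code. When $q\ge k+2t$ the length $n=k+2t$ does not exceed $q$, so one may take a Reed--Solomon code evaluating degree-$<k$ polynomials at $n$ distinct points of $\mathbb{F}_q$; it is MDS and therefore attains minimum distance $n-k+1=2t+1$. By Remark~\ref{rm:FCC_to_ECC}, any systematic $[\,k+2t,\,k,\,2t+1\,]_q$ ECC already serves as an $(f,t)$-FCC, whence $r_f(k,t)\le 2t$. Combined with the lower bound, this yields $r_f(k,t)=2t$.

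The main obstacle is on the achievability side, specifically pinning down the exact field-size threshold: the construction requires an MDS code of length $n=k+2t$, and the Reed--Solomon guarantee needs $n$ distinct evaluation points in $\mathbb{F}_q$, i.e.\ $n\le q$, which is precisely the hypothesis $q\ge k+2t$. One must also confirm that passing to systematic form neither enlarges the redundancy nor degrades the minimum distance; this is immediate for an MDS code, since any $k$ of its generator columns — in particular the first $k$ — are linearly independent, so row-reduction produces a generator of the form $[\,I_k \mid P\,]$ while leaving the code and its distance unchanged. The lower-bound half is comparatively routine once the single-coordinate-change (Hamming-graph connectivity) observation is in place.
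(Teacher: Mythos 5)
Your proposal is correct and follows essentially the same route as the paper: for the lower bound you locate a pair of messages at Hamming distance $1$ with distinct $f$-values and use the systematic structure to force $d_H(\mathbf{p}(\mathbf{a}),\mathbf{p}(\mathbf{b}))\ge 2t$, and for achievability you invoke a systematic $[\,k+2t,\,k,\,2t+1\,]_q$ MDS (Reed--Solomon) code, exactly as the paper does. The only cosmetic difference is that you extract the distance-$1$ pair directly from a single-coordinate-change path in the Hamming graph, whereas the paper argues by contradiction with an induction over Hamming-weight classes; these are two phrasings of the same connectivity observation.
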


\begin{proof}
\noindent\textbf{Lower Bound.} 
A version of this bound was proved for binary field in~\cite{FCCs:journals/tit/LenzBWY23}. 
We extend the result to any finite field \(\mathbb{F}_q\).

\smallskip
\emph{Step 1: Existence of a pair of messages differing in exactly one coordinate that map to two values of \(f\).}
Suppose, for contradiction, that every pair of messages whose coordinates differ in exactly one position must yield the same function value under \(f\). For each integer \(i\) with \(0 \le i \le k\), define the sets
\[
  A_i \;\triangleq\; \bigl\{\mathbf{u} \in \mathbb{F}_q^k : \text{the Hamming weight of \(\mathbf{u}\) is } i \bigr\}.
\]
Here, the Hamming weight of \(\mathbf{u}\) is the number of nonzero coordinates in \(\mathbf{u}\). By construction, the sets \(A_i\) are pairwise disjoint and their union is all of \(\mathbb{F}_q^k\).

Let \(f_0 \triangleq f(\mathbf{0}_k)\) be the value of \(f\) at the all-zero vector. By assumption, any vector in \(A_1\) differs from \(\mathbf{0}_k\) in exactly one coordinate and therefore must map to the same value \(f_0\). Next, every vector in \(A_2\) differs in exactly one coordinate from some vector in \(A_1\), implying all vectors in \(A_2\) also map to \(f_0\). Proceeding inductively, for each \(i\in\{1,\dots,k\}\), any \(\mathbf{u}\in A_i\) is at Hamming distance 1 from some \(\mathbf{v}\in A_{i-1}\). By the same assumption, \(f(\mathbf{u}) = f(\mathbf{v}) = f_0\). Consequently, all \(\mathbf{u}\in \mathbb{F}_q^k\) satisfy \(f(\mathbf{u})=f_0\). This implies that $f$ is a constant function and contradicts \(\lvert \mathrm{Im}(f)\rvert \ge 2\). Hence, there must exist \(\mathbf{u}_1,\mathbf{u}_2\in \mathbb{F}_q^k\) that differ in exactly one coordinate with \(f(\mathbf{u}_1)\neq f(\mathbf{u}_2)\).

\smallskip
\emph{Step 2: Distance requirement and redundancy lower bound.}
Since \(f(\mathbf{u}_1)\neq f(\mathbf{u}_2)\), any valid \((f,t)\)-FCC encoding \(c : \mathbb{F}_q^k \to \mathbb{F}_q^{k+r}\) must assign codewords \(c(\mathbf{u}_1), c(\mathbf{u}_2)\) that differ in at least \(2t+1\) coordinates:
\[
  d_H\bigl(c(\mathbf{u}_1),\, c(\mathbf{u}_2)\bigr) 
  \;\ge\;
  2t + 1.
\]
Since \(\mathbf{u}_1\) and \(\mathbf{u}_2\) differ in exactly one coordinate, \(d_H(\mathbf{u}_1,\mathbf{u}_2)=1\). Thus,
\begin{align*}
  2t + 1 
  \;&\le\;
   d_H\bigl(c(\mathbf{u}_1),\, c(\mathbf{u}_2)\bigr)\\
  \;& =\;
  d_H\bigl(\mathbf{u}_1, \mathbf{u}_2\bigr)
  \;+\;
  d_H\bigl(\mathbf{p}(\mathbf{u}_1),\, \mathbf{p}(\mathbf{u}_2)\bigr) \\
  & = 1 \;+\; d_H\bigl(\mathbf{p}(\mathbf{u}_1),\,\mathbf{p}(\mathbf{u}_2)\bigr),
\end{align*}
implying
\[
  d_H\bigl(\mathbf{p}(\mathbf{u}_1),\,\mathbf{p}(\mathbf{u}_2)\bigr) 
  \;\ge\; 
  2t.
\]
At least \(2t\) redundant symbols are therefore required for this single pair of messages to be separated by Hamming distance \(2t+1\). Consequently,
\[
  r_f(k,t) \;\ge\; 2t.
\]

\smallskip
\noindent\textbf{Achievability when \(q \ge k + 2t\).}  
When \(q \ge k + 2t\), it is possible to construct a systematic \(\bigl[n = k + 2t,\, k,\, 2t + 1\bigr]_q\) MDS code whose generator matrix is of the systematic form (see, e.g., \cite{DSS:journals/ftcit/RamkumarBSVKK22,MDS:journals/ComLet/LacanF04} for concrete constructions):
\[
\mathbf{G}_{k \times n} = [\,\mathbf{I}_{k} \,|\, \mathbf{P}\,],
\]
in which the first $k$ columns constitute an Identity matrix of size $k$, and the last $(n-k)$ columns are parity checks. Each message \(\mathbf{u} \in \mathbb{F}_q^k\) can be systematically encoded using $\mathbf{G}$ as
\[
  c(\mathbf{u}) \;=\; \mathbf{u}\cdot\mathbf{G},
\]
yielding a linear code whose minimum distance is \(2t+1\) and redundancy is \(r=n-k = 2t\). Because the constructed code ensures a minimum distance of at least \(2t+1\) between \emph{every pair} of distinct codewords, it is also sufficient for distinguishing every pair of messages that map to different function values under \(f\). This shows the existence of an FCC with redundancy $2t$, and therefore:
\[
  r_f(k,t) 
  \;\;\le\;\; 2t.
\]
Combining this with the established lower bound shows that
\[
  r_f(k,t) \;=\; 2t
~
\text{whenever}
~
  q \ge k + 2t.
\]
This completes the proof.
\end{proof}
Theorem~\ref{thm:bound_finite_field} implies that over sufficiently large fields ($q \ge k + 2t$), function-correcting codes (FCC) offer no redundancy advantage over classical error-correcting codes (ECC). In this regime, encoding with systematic MDS codes suffices to protect arbitrary function evaluations against errors, while still achieving the minimum possible redundancy. 
\begin{remark}
Theorem~\ref{thm:bound_finite_field} demonstrates that, over sufficiently large fields (i.e., \(q \ge k + 2t\)), the optimal redundancy \(r_f(k,t)\) equals \(2t\) and is independent of the code dimension \(k\). In contrast, for smaller alphabets (e.g., the binary field), the achievable redundancy may grow with \(k\), e.g., as fast as $\log(k)$; see, for example, Theorem~\ref{theo:bound_binary} and Example~\ref{ex:ORfunction}. Hence, there is a fundamental trade-off: larger fields allow redundancy that depends only on the correction capability \(t\), whereas smaller fields can force higher redundancy depending on \(k\). This interplay between field size and redundancy is a key consideration for practical code design.
\end{remark}

\subsection{A Conjecture on the Redundancy of FCC over Finite fields}
We note that whenever \(k \ge 2\),
\[
2t \le t\log(2k) < \frac{t \log(2k)}{1 - \tfrac{t}{k}\log \euler}.
\]
Therefore, the upper bound in~\eqref{eq:theorem_bounds} for the redundancy of FCC over the binary field provides a looser yet valid upper bound on the optimal redundancy of FCC over sufficiently large fields, specifically when \(q \ge k + 2t\). Examining the two extreme regimes of the field size—very large \(q\), where \(r_f(k,t) = 2t\), and very small \(q\), such as the binary field, where the redundancy may grow with \(k\)—motivates the following conjecture: we posit that for all “moderate” field sizes, an upper bound analogous to that in the binary case still applies.
\noindent

\begin{conjecture}
Let \(f : \mathbb{F}_q^k \to \mathrm{Im}(f)\) be a function with \(\lvert \mathrm{Im}(f)\rvert \ge 2\) and \(r_f(k,t)\) denote the optimal redundancy of an \((f,t)\)-FCC over a finite field \(\mathbb{F}_q\). Then, for all \(q < k + 2t\),
\[
  r_f(k,t) \;<\; \frac{t \,\log\bigl(2k\bigr)}{1 \;-\;\tfrac{t}{k}\,\log \euler}.
\] If that holds, then for all finite fields $\mathbb{F}_q^k$,
\[
  r_f(k,t) \;<\; \frac{t \,\log\bigl(2k\bigr)}{1 \;-\;\tfrac{t}{k}\,\log \euler}.
\]
\end{conjecture}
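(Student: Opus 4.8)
The plan is to attack the conjecture through the same reduction that proves Theorem~\ref{theo:bound_binary}. By the first item of Remark~\ref{rm:FCC_to_ECC}, which holds verbatim over any \(\mathbb{F}_q\), every systematic \([k+r,k,2t+1]_q\) error-correcting code is an \((f,t)\)-FCC, so it suffices to exhibit such a code with \(r\) below the target. Moreover, the algebra in the binary proof is alphabet-agnostic: once a redundancy bound of the shape \(r \le t\,\log(n+1)\) with \(n=k+r\) is secured, substituting \(n+1 = k+(r+1)\), applying \(\log(1+x)\le x\log\euler\), and rearranging reproduces \eqref{eq:to_show_upper} unchanged for \(k\ge 2\). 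Thus the entire problem reduces to a single \(q\)-ary analogue of the binary BCH estimate \(r\le t\log(n+1)\).

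First I would supply that estimate via \(q\)-ary narrow-sense BCH codes. For designed distance \(2t+1\) the redundancy equals the size of the union of the \(q\)-cyclotomic cosets of \(\{1,\dots,2t\}\) modulo \(n\). For \(n=q^m-1\) each coset has size at most \(m=\log_q(n+1)\), and the smallest in-range element of any such coset is not divisible by \(q\) (otherwise dividing by \(q\) yields a smaller in-range element of the same coset); hence the number of distinct cosets is at most \(2t-\lfloor 2t/q\rfloor\), giving
\[
r \;\le\; \bigl(2t-\lfloor 2t/q\rfloor\bigr)\,\log_q(n+1).
\]
The elementary inequality \(2(1-1/q)\le \log q\) for all \(q\ge 2\) (equality only at \(q=2\)) then yields \(\bigl(2t(1-1/q)\bigr)/\log q \le t\), so up to the floor correction the right-hand side is at most \(t\log(n+1)\), exactly the binary bound. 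To encode the required \(k\) message symbols I would start from a BCH code of dimension at least \(k\) and shorten it to dimension exactly \(k\); shortening preserves redundancy and does not decrease the minimum distance, and one picks the smallest admissible length \(n\ge k+r\).

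The hard part will be making this bound hold \emph{uniformly} across the whole range \(q<k+2t\). The minor gap is the floor: \(2t-\lfloor 2t/q\rfloor\) can exceed \(2t(1-1/q)\) by nearly one, inflating \(r\) by an extra \(\log_q(n+1)\); the slack between \(\log(2k)\) and \(\log k\) in the target absorbs this for \(k\ge 2\), but only barely when \(k\) is small. The serious gap is that BCH lengths are restricted to divisors of \(q^m-1\): when \(q<k+2t\) one is forced to \(m\ge 2\), and if moreover \(q>2t\) then \(\lfloor 2t/q\rfloor=0\) and every coset of \(\{1,\dots,2t\}\) has size up to \(m=2\), giving \(r\approx 4t\), which exceeds the target \(t\log(2k)\) whenever \(k\le 8\). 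So BCH codes alone do not close the conjecture in the regime where \(q\) is comparable to but below \(k+2t\) and \(k\) is small.

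To cover that regime I would replace BCH codes by denser algebraic families—extended Reed--Solomon codes when \(q\ge k+2t-1\), and algebraic-geometry (Goppa) codes of small genus for \(q\) somewhat below this—pushing the redundancy back toward the Singleton value, i.e.\ redundancy \(2t+g\) with small \(g\), which is below the target since \(2t\le t\log(2k)\) for \(k\ge 2\). The two ends of the moderate range are thus tractable (BCH near \(q=2\), near-MDS near \(q=k+2t\)); the genuine obstacle is a uniform code-existence bound interpolating across the middle, which is precisely why the statement remains conjectural. Finally, the second implication is immediate: for \(q<k+2t\) it is the first part, while for \(q\ge k+2t\) Theorem~\ref{thm:bound_finite_field} gives \(r_f(k,t)=2t\le t\log(2k)<\frac{t\log(2k)}{1-\frac{t}{k}\log\euler}\).
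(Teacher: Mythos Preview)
The statement you are addressing is presented in the paper as a \emph{conjecture}, not a theorem: the authors explicitly write that ``verifying or refuting this behavior for all finite fields remains an open problem of particular interest,'' and they give no proof. There is therefore no paper-side argument to compare your proposal against.

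Your write-up is, appropriately, a plan of attack with explicitly flagged gaps rather than a proof. The reduction via Remark~\ref{rm:FCC_to_ECC} to constructing a systematic $[k+r,k,2t+1]_q$ code, and the observation that the arithmetic of Theorem~\ref{theo:bound_binary} carries over verbatim once an inequality of the form $r\le t\log(n+1)$ is secured, are both correct. Your $q$-ary BCH analysis is also sound as far as it goes: the coset count $2t-\lfloor 2t/q\rfloor$ is standard, and the inequality $2(1-1/q)\le\log q$ does convert $(2t-\lfloor 2t/q\rfloor)\log_q(n+1)$ into $t\log(n+1)$ up to the floor defect you identify. The real obstruction you isolate---the regime $2t<q<k+2t$ with small $k$, where BCH is forced to $m\ge 2$ and the redundancy bound degrades to roughly $4t$---is exactly where the conjecture has teeth; gesturing toward near-MDS or small-genus AG codes there is a reasonable instinct, but as you yourself say, it does not close the gap. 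So your proposal correctly maps the difficulty rather than resolving it, which is the honest outcome for an open problem. The one fully rigorous piece is the second implication: for $q\ge k+2t$, Theorem~\ref{thm:bound_finite_field} gives $r_f(k,t)=2t\le t\log(2k)<\tfrac{t\log(2k)}{1-(t/k)\log\euler}$ for $k\ge 2$, and this is precisely the observation the paper makes in the paragraph preceding the conjecture.
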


\noindent
In other words, we conjecture that for all finite fields the redundancy is still bounded above by the same quantity as in the binary-field setting. Verifying or refuting this behavior for all finite fields remains an open problem of particular interest.


\section{Conclusion}\label{sec:Conclusion}
This work explored the redundancy of FCC over finite fields, a class of codes designed to protect specific function evaluations rather than entire messages. We established tight bounds on the redundancy for the binary field and large finite fields, showing that the redundancy depends on the error correction capability \( t \) and, in some cases, the code dimension \( k \). The redundancy is independent of \( k \) for sufficiently large fields, achieving the optimal value \( 2t \).

Our findings highlight the efficiency of FCC in reducing redundancy compared to classical error-correcting codes, particularly in scenarios where only specific function evaluations need protection. The question of moderate field sizes remains an open and offers a promising direction for future research. 
\section*{Acknowledgment}
This work was supported in part by NSF CCF-2122400. We thank anonymous reviewers for their constructive feedback.

\bibliography{ISTC}
\bibliographystyle{IEEEtran}

\end{document}